%% file: paper.tex
\documentclass[11pt]{article}

\usepackage[margin=1in]{geometry}
\usepackage[T1]{fontenc}
\usepackage{lmodern}
\usepackage{microtype}
\usepackage{amsmath}
\usepackage{amssymb}
\usepackage{amsthm}
\usepackage{booktabs}
\usepackage{graphicx}

\newtheorem{definition}{Definition}
\newtheorem{proposition}{Proposition}
\usepackage{enumitem}
\usepackage{xcolor}
\usepackage[hidelinks,colorlinks=true,linkcolor=blue!50!black,
            urlcolor=blue!50!black,citecolor=blue!50!black]{hyperref}
\usepackage[capitalize,noabbrev]{cleveref}
\usepackage[numbers,sort&compress]{natbib}

\title{\bfseries Skills as Verifiable Artifacts:\\
       A Trust Schema and a Biconditional Correctness Criterion\\
       for Human-in-the-Loop Agent Runtimes}

\author{%
  Alfredo Metere\\
  Enclawed, LLC, California, USA\\
  \texttt{alfredo.metere@enclawed.com}
}
\date{\today}

\begin{document}

\maketitle

\begin{abstract}\noindent
Agent skills --- structured packages of instructions, scripts, and
references that augment a large language model (LLM) without
modifying the model itself --- have moved from convenience to
first-class deployment artifact. The runtime that loads them
inherits the same problem package managers and operating systems
have always faced: a piece of content claims a behavior; the
runtime must decide whether to believe it. We argue this paper's
central thesis up front: a skill is \emph{untrusted code} until
it is verified, and the runtime that loads it must enforce that
default rather than infer trust from a signature, a clearance, or
a registry of origin. Without skill verification, a
human-in-the-loop (HITL) gate must fire on every irreversible
call --- which is operationally untenable and degrades into
rubber-stamping at any non-trivial scale. With skill verification
treated as a separate, gated process, HITL fires only for what
is unverified, and the system becomes sustainable. We give a
trust schema (\S\ref{sec:schema}) that includes an explicit
verification level on every skill manifest; a capability gate
(\S\ref{sec:gate}) whose HITL policy is a function of that
verification level; a \emph{biconditional} correctness criterion
(\S\ref{sec:biconditional}) that any candidate verification
procedure must satisfy on an adversarial-ensemble exercise
(\S\ref{sec:eval}); and a portable runtime profile
(\S\ref{sec:guidelines}) with ten normative guidelines abstracted
from a working open-source reference implementation
\cite{metere2026enclawed}. The contribution is harness- and
model-agnostic; nothing here requires retraining, fine-tuning, or
proprietary infrastructure.
\end{abstract}

\noindent\textbf{Keywords:} LLM agents, agent skills, supply chain,
formal verification, human-in-the-loop, mandatory access control,
audit log.

\section{Introduction}

The emergence of \texttt{SKILL.md} as a portable, harness-neutral
description of procedural knowledge for LLM agents has solved a
real problem: an agent operator can curate a library of small,
inspectable, version-controlled skill packages and ship them
across platforms. Empirically, curated skill libraries improve
agent pass-rates on benchmarks such as Mind2Web
\cite{deng2023mind2web} and Embodied Agent Interface
\cite{li2024embodied}, while indiscriminate admission of
self-generated skills can degrade or actively attack an agent's
behavior \cite{andriushchenko2025agentharm}. The asymmetry is
the same one distributed software ecosystems hit decades ago:
trust cannot be inferred from origin alone
\cite{thompson-trusting-trust}, package managers that do infer
it have been broken every way the literature predicted
\cite{cappos-look-mirror, birsan-dependency-confusion}, and the
formal response is by now textbook \cite{samuel-tuf}: typed
artifacts, signed manifests, capability sandboxing, hash-chained
audit, and least-privilege gates around side-effects.

What is \emph{not} textbook is what the runtime should do with a
skill it has just signed and clearance-checked but has \emph{not}
yet verified to behave the way its manifest claims. Most current
harnesses treat that distinction as collapsed: signature plus
clearance equals trust. We argue this is wrong, dangerously so,
and that the cost of getting it wrong is paid by the human in
the loop.

\paragraph{Skills are untrusted code.} A skill influences the
agent's prompt, tool calls, and write surfaces, making a malicious
skill a direct prompt-injection vector that persists across model
calls and survives context truncation \cite{owasp-llm-top-10,
mitre-atlas, ferrara2024genai, chen2025agentpoison}. A signature
on the manifest tells the runtime that some signer endorsed the
artifact at some point; it says nothing about whether the
artifact's behavior conforms to its declared manifest. Recent
work shows the gap is exploitable in practice: backdoors can be
smuggled into signed code-completion models in ways that pass
strong static detection \cite{yan2024codebreaker}. Treating a
signed-and-cleared skill as trusted is no more defensible than
treating a code-signed executable as malware-free because the OS
vendor signed it.

\paragraph{HITL is the universal default for unverified skills.}
Without a way to trust a skill's claims about itself, the runtime
must gate every irreversible call through human-in-the-loop
(HITL) approval. This is operationally untenable: operators under
load rubber-stamp every prompt, degrading the gate into a
fiction. The right response is \emph{not} to weaken the gate; it
is to introduce a second process --- skill verification --- whose
successful completion allows the runtime to trust portions of a
skill's manifest and relax HITL frequency. Verification happens
once at bootstrap; the result is recorded in the manifest as an
explicit \emph{verification level} and is immutable for the
session, mirroring the ``capability bootstrap discipline''
\cite{ocap-discipline} adapted to skills.

\paragraph{Side-effects must be verifiable, not just gated.}
We propose a single \emph{biconditional} criterion: the
observable side-effects of an agent run must be in 1-to-1
correspondence with the approved-and-executed set in the audit
log. Any candidate verification procedure (review, fuzz testing,
adversarial ensemble, formal analysis) must demonstrate that the
skill under test satisfies this criterion before its manifest can
be elevated above the unverified default.

\paragraph{Contributions.} This paper makes four contributions.

\begin{enumerate}
  \item A \emph{skill trust schema} (\S\ref{sec:schema}) that
        treats a skill as a tuple of manifest, content, and
        capabilities, with an explicit \emph{verification level}
        field and a no-runtime-mutation discipline.
  \item A \emph{capability-gate model} (\S\ref{sec:gate}) whose
        HITL policy is a function of the loaded skill's
        verification level: unverified skills incur HITL on every
        irreversible call; verified skills incur HITL only for
        capabilities outside the verified manifest.
  \item A \emph{biconditional correctness criterion}
        (\S\ref{sec:biconditional}) that any candidate skill-
        verification procedure must satisfy, with a
        characterization of the failure modes it catches and
        the ones it does not.
  \item A \emph{portable runtime profile} (\S\ref{sec:guidelines})
        with ten normative guidelines abstracted from a working
        open-source reference implementation
        \cite{metere2026enclawed}, including a no-bypass-switch
        invariant and an untrusted-by-default rule for skill
        admission.
\end{enumerate}

We sketch an adversarial-ensemble evaluation
(\S\ref{sec:eval}) under which the criterion is exercised --- and
under which a candidate skill-verification procedure can be
benchmarked --- and discuss open problems in
\S\ref{sec:discussion}.

The schema is deliberately model- and harness-agnostic. It assumes
nothing about training, fine-tuning, or RLHF, and assumes only the
weakest run-time interface: that an agent harness invokes tools
through a typed dispatch step the runtime can interpose on. Every
existing harness adopting \texttt{SKILL.md} satisfies this
constraint.

\paragraph{Reference implementation.} We draw both the schema and
the criterion from a working open-source framework,
\emph{enclawed} \cite{metere2026enclawed}, which hard-fork-hardens
a single-user AI assistant gateway with each of
the primitives this paper
abstracts: a Bell--LaPadula classification scheme, an Ed25519
signed-module loader with a clearance-bounded trust root, a
hash-chained audit log, an egress guard, a regex-based DLP
scanner, a HITL controller with a checkpointable agent-session
state machine, and a transaction buffer with rollback. Section
\ref{sec:guidelines} extracts a small set of normative guidelines
from that reference implementation, intended as a starting point
for a portable runtime profile any \texttt{SKILL.md}-adopting
harness could implement to bound the attack surface skills
introduce.

\section{Threat model: skills as a new supply-chain surface}
\label{sec:threat}

We assume an agent operator who already runs LLM agents with tool
access in an environment where \emph{some} of the tools have
real-world side-effects (file deletion, message sending, payment,
database writes, on-chain operations). The skill ecosystem the
operator subscribes to is partially trusted: the operator's own
curated skills are signed by an operator-controlled root, but
upstream registries and self-generated skills are not.

\paragraph{Adversaries.} We consider three:

\begin{enumerate}
  \item A \emph{registry attacker} who publishes a malicious skill
        that the operator's harness pulls and runs. The skill's
        instructions induce the agent to take harmful side-effects
        in the operator's environment.
  \item A \emph{prompt-injection attacker} whose payload reaches
        the agent through a benign skill's data dependency
        (e.g.~a fetched document the skill instructs the agent to
        summarise) and re-routes the agent's tool calls.
  \item A \emph{self-generated-skill attacker}: the agent itself,
        under model error or adversarial input, synthesises a new
        skill, registers it, and uses it. This is not a malicious
        actor in the usual sense, but operationally indistinguishable
        from one.
\end{enumerate}

\paragraph{Out of scope.} We do not assume an attacker who has
compromised the operator's signing key, the runtime binary, or the
audit-log storage out-of-band. We do not assume a defended
hardware trust anchor (a TPM, secure element, or equivalent),
though such an anchor improves every layer below. We do not
discuss model-weight-level attacks (poisoning, jailbreak from
training data), which are orthogonal.

\paragraph{Goal.} The runtime should:
\begin{itemize}
  \item refuse to load a skill whose manifest does not verify;
  \item refuse to dispatch a tool call whose required capability
        the loaded skill did not declare;
  \item refuse to execute an irreversible call without an
        explicit, audited decision;
  \item produce an audit trail under which the biconditional
        criterion of \S\ref{sec:biconditional} can be checked.
\end{itemize}

\section{The skill trust schema}
\label{sec:schema}

We define a skill artefact as a tuple
\[
  \texttt{Skill} \;=\; (\,M,\; \texttt{content},\; \sigma\,)
\]
where $M$ is a manifest, \texttt{content} is the body of the skill
(typically the \texttt{SKILL.md} file plus referenced scripts), and
$\sigma$ is a detached signature over the canonical bytes of $(M,
\texttt{content})$ produced by a signer in the operator's
trust root.

The manifest $M$ has five mandatory fields:

\begin{description}
  \item[$M.\mathrm{label}$] a classification label
        $\langle \ell, C, R \rangle$ in a Bell--LaPadula style
        lattice \cite{bell-lapadula}, with rank~$\ell$, compartment
        set~$C$, and releasability caveats~$R$. Labels combine via
        join: $a \sqcup b = \langle \max(\ell_a,\ell_b), C_a \cup
        C_b, R_a \cap R_b \rangle$.
  \item[$M.\mathrm{caps}$] a finite set of declared capabilities
        drawn from a fixed vocabulary; see \S\ref{sec:caps}.
  \item[$M.\mathrm{signer}$] a key identifier referencing an entry
        in the trust root.
  \item[$M.\mathrm{version}$] a monotone integer; replays of an
        older signed manifest are rejected if a newer one with the
        same identity has been observed.
  \item[$M.\mathrm{verification}$] the skill's verification level;
        see \S\ref{sec:verification}.
\end{description}

\subsection{Verification levels}
\label{sec:verification}

A signed manifest binds an artifact to a signer's name; it does
not bind the artifact's behavior to the manifest's claims. We
make this distinction explicit in the schema. Every skill
manifest carries a verification level $M.\mathrm{verification}$
drawn from a fixed enum of four values: \textsc{unverified},
\textsc{declared}, \textsc{tested}, and \textsc{formal}. The
default is \textsc{unverified}; higher levels grant the runtime
more latitude when policing the skill's tool calls
(\S\ref{sec:gate}).

\paragraph{\textsc{unverified}.} No claim is made about the
skill's behavior beyond its existence. Default for any skill the
operator has not explicitly examined. The runtime treats this
skill as untrusted code: every irreversible capability call goes
through the HITL gate, regardless of whether the manifest
declares the capability or not.

\paragraph{\textsc{declared}.} A trust-root signer has examined
the skill's content and \emph{attests} that the skill's
side-effects are bounded by its declared capability set
$M.\mathrm{caps}$. The attestation is the signer's professional
reputation; the runtime trusts capabilities in $M.\mathrm{caps}$
but still requires HITL on irreversible calls outside that set.

\paragraph{\textsc{tested}.} In addition to a declaration, the
skill passes an \emph{adversarial-ensemble} verification run
(\S\ref{sec:eval}) under which the biconditional criterion of
\S\ref{sec:biconditional} holds. The runtime permits irreversible
capabilities in $M.\mathrm{caps}$ without per-call HITL but logs
every such call.

\paragraph{\textsc{formal}.} A formal analysis tool has produced
a machine-checkable proof that the skill's behavior is a subset
of its declared capability set under the runtime's threat model.
This level is aspirational at the time of writing; we include it
for completeness because the schema field is fixed-width and
adding it later requires a manifest version bump.

\paragraph{Why levels, not a continuum.} A continuous
``trust score'' would allow the agent to argue itself into
permission gradients; a fixed enum forces a discrete decision
that can be audited and revoked atomically. The same reasoning
applies to the configuration profiles in
\S\ref{sec:profile}.

\paragraph{Verification is an offline, bootstrap-time concern.}
Verification levels are set by the operator (or the operator's
delegate, e.g., a CI pipeline running an adversarial-ensemble
suite over each candidate skill) \emph{before} the runtime
starts, and recorded into the signed manifest at the
\textsc{declared} level or above. The runtime never elevates a
verification level during a session; doing so would re-introduce
the trust-bootstrap problem the schema is designed to avoid. A
manifest that arrives at runtime claiming a verification level
above what its signer is authorized to attest is rejected.

\subsection{No agent mutation of skills at runtime}
\label{sec:no-mutation}

Once a skill is loaded, its content and manifest are immutable
for the lifetime of the agent session: the agent cannot edit a
skill, append to its manifest, or upgrade its verification level.
Any attempt the agent makes to modify a loaded skill --- through
a file-system tool call, a runtime API, or any indirect
side-effect on the on-disk artifact --- is itself an irreversible
capability that must walk the HITL gate \emph{and} be recorded
in the hash-chained audit log with the pre- and post-mutation
content hashes and the operator's decision, regardless of whether
the request is approved or denied. On approval, the modification
produces a \emph{new} skill artifact that must be re-verified
before the next session. Without the immutability rule, an agent
that admits one verified skill could rewrite that skill's content
in place and silently change what ``verified'' means for the
rest of the session; without the audit trail, the rewrite would
be invisible to post-incident review.

\subsection{The trust root}

The trust root is a finite, append-only set of signer entries
$\{(\mathrm{keyId}_i, \mathrm{pubKey}_i, \mathrm{maxClearance}_i)\}$.
A signer is trusted to sign skills with classification at most
$\mathrm{maxClearance}_i$. The trust root supports a one-shot
\emph{lock} operation: once locked, mutations
($\mathtt{set}$, $\mathtt{remove}$) raise a typed error. The
locked state is the production posture; an unlocked trust root is
acceptable only during host bootstrap, before any external input
has been read. This is the standard ``capability bootstrap
discipline'' adapted from object-capability systems
\cite{ocap-discipline}.

\subsection{Manifest verification}

Loading a skill walks the following steps in order at bootstrap,
before any external input is read, failing closed on any error:

\begin{enumerate}
  \item Parse $M$ from canonical JSON; reject unknown fields,
        prototype-pollution keys, and missing mandatory fields.
  \item Resolve $M.\mathrm{signer}$ in the trust root. If absent,
        reject.
  \item Verify $\sigma$ against the resolved public key over the
        canonical bytes of $(M, \texttt{content})$; reject on
        mismatch.
  \item Check that $M.\mathrm{label} \preceq
        \mathrm{maxClearance}$ for the resolved signer. A signer
        cannot sign above its authorized clearance.
  \item Check $M.\mathrm{label} \preceq \texttt{user.clearance}$
        for the running operator. The operator cannot load a skill
        that exceeds their own clearance.
  \item Check that $M.\mathrm{verification}$ does not exceed the
        signer's authorized verification level. A signer attesting
        \textsc{tested} must hold attestation authority for that
        level; \textsc{formal} requires a corresponding tooling
        attestation. Default if absent: \textsc{unverified}.
  \item Register the declared $M.\mathrm{caps}$ with the
        runtime's capability gate (\S\ref{sec:gate}), tagged with
        $M.\mathrm{verification}$.
\end{enumerate}

A skill that survives all seven steps is \emph{loaded}; its
content is now reachable by the LLM at the registered
verification level. A failure at any step produces a typed audit
record and aborts the load. After bootstrap completes, the
loaded set is frozen for the session; loading a new skill at
runtime requires the runtime to be re-bootstrapped and is
governed by the \S\ref{sec:no-mutation} rule.

\subsection{Capability vocabulary}
\label{sec:caps}

The vocabulary is small enough to be enumerated and large enough
to discriminate side-effect classes. We propose a minimal set:

\begin{table}[h]
\centering\footnotesize
\begin{tabular}{@{}l p{0.42\columnwidth}@{}}
\toprule
\textbf{Capability} & \textbf{Side-effect class} \\
\midrule
\texttt{net.egress(host)}        & DNS-resolved network reach to host \\
\texttt{fs.read(path)}           & filesystem read under path \\
\texttt{fs.write.rev(path)}      & write that can be rolled back \\
\texttt{fs.write.irrev(path)}    & delete, overwrite, truncate \\
\texttt{tool.invoke(name)}       & invoke a named tool registered with the harness \\
\texttt{spawn.proc(cmd)}         & external process spawn \\
\texttt{publish(channel, \ldots)} & post to an external channel \\
\texttt{pay(token, amount)}      & transfer of fungible value \\
\texttt{mutate.schema(target)}   & migrate a database / configuration schema \\
\bottomrule
\end{tabular}
\caption{A minimal capability vocabulary. The split between
\texttt{fs.write.rev} and \texttt{fs.write.irrev} is the
load-bearing distinction the gate uses in \S\ref{sec:gate}.}
\label{tab:caps}
\end{table}

A skill's manifest must enumerate every capability its content
intends to invoke. A capability not in $M.\mathrm{caps}$ is denied
at the gate, regardless of what the skill content asks.
Theoretically, the vocabulary is the minimal carrier for the
reversible/irreversible partition over the principal classes
under which an agent's authority is denominated in the
object-capability tradition~\cite{miller-thesis}.

\paragraph{Manifest-declared, not runtime-inferred.} The reversibility
tag on a capability is a property of the runtime's own dispatch
taxonomy, not a static analysis of underlying syscalls. A path-write
is reversible exactly when the dispatch goes through the runtime's
transaction buffer (\S\ref{sec:reversible-irreversible}), which holds
the change in memory and either commits or rolls back; a write that
bypasses the buffer (or hits a remote object the runtime cannot
snapshot, or invokes a side-effect on a counterparty the runtime
cannot coordinate with) is classified \texttt{irrev}. The runtime
need not inspect the filesystem; it inspects which dispatch path
\emph{it} routed the call through. This sidesteps the well-known
problem that POSIX filesystems do not expose enough information to
classify writes statically.

\paragraph{Minimum, not closure.} Table~\ref{tab:caps} is a minimum
vocabulary, not a fixed closure. A runtime adding a new side-effect
class (e.g., \texttt{enclave.attest}, \texttt{key.derive}) extends
the enum and bumps the manifest schema version; manifest verification
(\S\ref{sec:schema}) already rejects unknown capability tokens at
load time, so vocabulary growth is monotone and backwards-compatible.
The gate's invariants in \S\ref{sec:gate} are stated over the
\emph{declared} capability set, not over Table~\ref{tab:caps}'s
specific membership.

\section{The capability gate}
\label{sec:gate}

The capability gate is the runtime layer between the LLM-driven
agent and the external world. It receives a tool-call envelope
emitted by the agent (typically as a JSON object), looks up the
corresponding capability, and consults the gate policy.

\subsection{Gate policy as a function of verification level}
\label{sec:gate-by-level}

The gate's behavior depends on the loaded skill's verification
level (\S\ref{sec:verification}). The mapping is fixed and is
not configurable per call:

\paragraph{\textsc{unverified}.} Every irreversible capability
call the agent makes while this skill is the active context
passes through HITL, regardless of whether the skill's manifest
declares the capability. Reversible calls execute through the
transaction buffer (\S\ref{sec:reversible-irreversible}). HITL on
every irreversible call is the universal default; verification
is the only path off it.

\paragraph{\textsc{declared}.} Irreversible calls whose
\texttt{(capability, target)} is in $M.\mathrm{caps}$ walk the
four-state HITL lifecycle of \S\ref{sec:hitl-lifecycle} with the
\emph{broker auto-approving} (no human prompt; the operator's
policy delegates the decision to the manifest's declared set);
the \texttt{irreversible.\allowbreak request},
\texttt{decision}, and \texttt{executed}/\texttt{error} records
are all written. Calls outside $M.\mathrm{caps}$ walk the same
lifecycle with the broker consulted normally and may stop on a
human. Reversible calls execute through the transaction buffer
regardless of verification level.

\paragraph{\textsc{tested}.} Same as \textsc{declared}, with the
addition that the runtime maintains a per-session biconditional
check (\S\ref{sec:biconditional}) over the operations the skill
issued and aborts the session if the check fails between rounds.

\paragraph{\textsc{formal}.} Same as \textsc{tested}; the
difference is in offline trust, not in runtime behavior.

In all four cases, the runtime audits every call; the difference
is whether the gate stops to ask a human first. The default is
``ask''. Verification is what buys the runtime permission to
stop asking, and only for what the verification covered.

\subsection{Reversible vs. irreversible}
\label{sec:reversible-irreversible}

The split between \emph{reversible} and \emph{irreversible}
capabilities is the design's load-bearing distinction.

\begin{itemize}
  \item A \textbf{reversible} side-effect leaves a single object in
        a state from which the runtime, holding a recent snapshot
        of that object, can return it to its prior state without
        external coordination. Memory-buffered file writes that
        commit on confirm are reversible; in-database transactions
        with rollback are reversible; an SQS queue write to a
        dead-letter queue under operator control is reversible.
  \item An \textbf{irreversible} side-effect is one for which the
        runtime cannot, alone, restore the prior state of the
        affected world. Sending an email is irreversible. Posting
        to a public channel is irreversible. Issuing an on-chain
        transaction is irreversible. Deleting a file from a remote
        store the runtime does not control is irreversible.
\end{itemize}

The runtime classifies every capability call as one or the other.
Reversible calls execute through a transaction buffer that holds
the change in memory and either commits (after the operation
succeeds and is audited) or rolls back. Irreversible calls go
through the four-state lifecycle of \S\ref{sec:hitl-lifecycle}.

\subsection{The HITL lifecycle}
\label{sec:hitl-lifecycle}

For every irreversible call the runtime walks four states:

\begin{enumerate}
  \item \textbf{request.} The agent emits a tool-call envelope
        $\langle \mathrm{op}, \mathrm{args}, \mathrm{reasoning}
        \rangle$. The runtime appends a typed
        \texttt{irreversible.\allowbreak request} record to the audit log.
  \item \textbf{decide.} The runtime consults a \emph{broker} ---
        an opaque oracle the operator chose at deploy time --- and
        receives a binary decision $d \in \{\texttt{approve},
        \texttt{deny}\}$. The runtime appends a typed
        \texttt{irreversible.\allowbreak decision} record carrying $d$ and the
        broker's identity.
  \item \textbf{execute.} If $d = \texttt{approve}$, the runtime
        performs the side-effect through the host APIs the
        capability resolves to. If the side-effect succeeds the
        runtime appends \texttt{irreversible.\allowbreak executed} with
        $\mathrm{ok}=\texttt{true}$. If the host APIs fail (the
        target file vanished between approval and call, the network
        is partitioned, the chain rejected the transaction) the
        runtime appends \texttt{irreversible.\allowbreak error} instead.
  \item \textbf{audit.} The records of the previous three states
        are linked by a shared \emph{request-id} and form a
        complete trace of the call.
\end{enumerate}

\subsection{Broker policies}

The broker is a configurable component. We see four useful
defaults:

\paragraph{\textsc{deny-all}.} The broker denies every request.
Equivalent to running the runtime without irreversible
capability at all; useful as a baseline and as the fail-safe
default when the operator has not configured a broker.

\paragraph{\textsc{policy}.} The broker reads an out-of-band
policy document (a file the LLM cannot reach) carrying allow- and
deny-rules over capability + argument shape. The decision is
mechanical and reproducible.

\paragraph{\textsc{interactive}.} The broker prompts a human via
terminal, message bus, or webhook, with a timeout that defaults
to deny.

\paragraph{\textsc{webhook}.} The broker delegates to a remote
service holding the operator's policy.

The policy regime has the convenient property that, given the same
audit log of requests, two independent operators with the same
policy produce the same decisions; this makes integration tests
deterministic.

\section{Biconditional correctness for HITL gating}
\label{sec:biconditional}

A gate's job is not merely to deny bad requests; it is to make
the runtime's claim about the world match the world. The same
property that makes the gate auditable in production also makes
it the natural criterion against which a candidate
skill-verification procedure can be exercised. We formalize the
property and discuss both uses.

\paragraph{Scope of the term ``correctness''.} The criterion below
is \emph{gate-correctness} (or, equivalently, audit-world
consistency) at the operation-and-target projection: it ensures
that the runtime's claim about what it approved-and-executed
matches the runtime's observable footprint on the corpus. It is
\emph{not} a value-level correctness criterion. Two operations
that agree on \texttt{(op, target)} but disagree on the payload
they wrote --- e.g., an approval to append log line $\ell_1$
followed by a write of $\ell_2 \neq \ell_1$ to the same file ---
satisfy the biconditional. \S\ref{sec:undetected} enumerates this
and the other limits we know of, and \S\ref{sec:related}
positions the criterion against richer notions of agent-action
correctness in concurrent work.

\subsection{Setup}

Fix an agent run with starting state $s_0$ of a designated
\emph{corpus} (a finite set of mutable objects the agent can act
on). After the run the corpus is in state $s_1$. The runtime has
produced an audit log $L$ of typed records. We define:

\begin{align}
D &\;=\; \mathrm{delta}(s_0, s_1) \quad \text{(observed)} \\
S &\;=\; \{\,r \in L \mid r.\mathrm{type} = \texttt{irreversible.\allowbreak executed}, \\
  &\qquad\qquad\;\; r.\mathrm{ok} = \texttt{true}\,\} \quad \text{(audited)}
\end{align}

$D$ is what the runtime says happened to the world; $S$ is what
the runtime says it approved and executed.

\subsection{The biconditional}

\begin{definition}[Biconditional pass]
The audit log $L$ \emph{passes the biconditional} for the run iff
the multiset projection of $D$ onto operation-and-target equals
the corresponding multiset projection of $S$.
\end{definition}

In words: \emph{every} corpus change is explained by a matching
approved-and-executed record, and \emph{every} approved-and-executed
record corresponds to a matching corpus change. Either direction
broken is a fail.

\subsection{Detected failure modes}
\label{sec:detected-failure-modes}

\begin{proposition}
\label{prop:detected}
If the audit log $L$ passes the biconditional, the following
failure modes are absent:

\begin{enumerate}
  \item[\textbf{F1}] \textbf{Gate bypass}: an irreversible
        side-effect occurred without a corresponding
        \texttt{irreversible.\allowbreak request}/\texttt{decision} pair.
  \item[\textbf{F2}] \textbf{Audit forgery}: the log claims an
        approved-and-executed record whose corresponding corpus
        change is not present.
  \item[\textbf{F3}] \textbf{Approved-but-failed without note}: a
        request was approved and the runtime claims it executed,
        but the host call silently failed.
  \item[\textbf{F4}] \textbf{Wrong-target execution}: the runtime
        approved an op against target $t_1$ but the side-effect
        landed on $t_2 \neq t_1$.
\end{enumerate}
\end{proposition}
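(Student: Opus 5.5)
The plan is a contrapositive argument. For each failure mode F1--F4 I will show that its occurrence forces the multiset projections $\pi(D)$ and $\pi(S)$ onto \texttt{(op, target)} to differ. Then the biconditional pass of the definition above rules it out. Write $\pi(D)$ for the multiset of \texttt{(op, target)} pairs in $\mathrm{delta}(s_0,s_1)$, and $\pi(S)$ for the multiset of \texttt{(op, target)} pairs carried by \texttt{irreversible.\allowbreak executed} records with $\mathrm{ok}=\texttt{true}$.

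Two structural facts from \S\ref{sec:gate} will be used throughout.
\begin{enumerate}
  \item By the lifecycle of \S\ref{sec:hitl-lifecycle}, an \texttt{executed} record is written only after a \texttt{decision} with $d=\texttt{approve}$, which itself follows a \texttt{request}; the three records are linked by a request-id. So every element of $S$ comes with a request/decision pair.
  \item I will state explicitly, as a modelling assumption, that the corpus is the scope of the irreversible side-effects under test. Every change in $D$ is therefore attributable either to a gated irreversible call or to a bypass. Reversible buffered writes are either excluded from the corpus or rolled back.
\end{enumerate}

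The four cases then go as follows.
\begin{itemize}
  \item \textbf{F1:} a bypassing side-effect contributes a pair to $\pi(D)$ with no request/decision pair. By fact 1 it has no executed record, so its multiplicity in $\pi(D)$ exceeds that in $\pi(S)$. This is the ``$\subseteq$'' direction.
  \item \textbf{F2:} a forged record contributes to $\pi(S)$ with no matching change in $D$. This is the ``$\supseteq$'' direction, again as a multiplicity excess.
  \item \textbf{F3:} a silent host failure logged as $\mathrm{ok}=\texttt{true}$ is formally an instance of F2. The record is in $S$ but the corpus did not change, so the same argument applies.
  \item \textbf{F4:} the approved record contributes $(\mathrm{op},t_1)$ to $\pi(S)$ while the world contributes $(\mathrm{op},t_2)$ to $\pi(D)$. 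Since $t_1\neq t_2$, the multiplicities of $(\mathrm{op},t_1)$ and $(\mathrm{op},t_2)$ both differ between the two multisets.
\end{itemize}

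The main obstacle is the multiset bookkeeping when failures co-occur and cancel. For example, a wrong-target write to $t_2$ combined with a forged record for $t_2$, plus a genuine bypass on $t_1$, could rebalance the counts. So the clean claim is that each failure mode, occurring \emph{in isolation} against an otherwise faithful log, breaks equality. The general statement needs an argument that compensating combinations cannot arise. I would first try to handle this with a per-request-id matching argument: equality of the multisets yields a bijection between $S$ and $D$ that respects \texttt{(op, target)}, and each failure mode is shown to be a defect in every such bijection. If that does not go through, I would weaken the proposition to the isolated-failure reading and defer colluding combinations to \S\ref{sec:undetected}.

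A second subtlety is the meaning of $\mathrm{delta}$ when two operations touch one object, or one write overwrites another. Here I would fix the convention that $\mathrm{delta}$ records one entry per effected operation, as the multiset wording of the definition suggests.
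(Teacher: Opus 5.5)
Your case analysis is the same as the paper's sketch. F1 and F2 are excesses on the $D$ and $S$ sides of the equality. F3 is an $\mathrm{ok}=\texttt{true}$ record in $S$ with no matching corpus change. F4 follows from target identity being part of the projection. You also make explicit that F1 needs the lifecycle invariant, since the biconditional never inspects \texttt{request}/\texttt{decision} records; that is a real improvement on the sketch.

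The step that will fail is the per-request-id bijection you plan to try first for co-occurring faults. Multiset equality does give a bijection respecting \texttt{(op, target)}. But $D$ carries no request-ids, so no property of that bijection can tell a faithful pairing from a compensating one. Concretely, let two approved calls $(\mathrm{op},t_1)$ and $(\mathrm{op},t_2)$ both execute with $\mathrm{ok}=\texttt{true}$ but land on each other's targets. Then $\pi(S)=\pi(D)=\{(\mathrm{op},t_1),(\mathrm{op},t_2)\}$ and the biconditional passes. Yet F4 has occurred twice, with your lifecycle invariant intact. Likewise, a bypass on $t$ hidden by a forged \texttt{executed} record for $(\mathrm{op},t)$ passes. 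So the unqualified statement is false, not merely hard to prove. Commit to your fallback instead: each mode occurring alone against an otherwise faithful log. That is the reading the paper's ``follow directly from the multiset equality'' silently relies on, and it is what the evaluation tests, since it injects exactly one fault per run.

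Your convention that $\mathrm{delta}$ has one entry per effected operation needs the same scrutiny. In the paper, $\mathrm{delta}(s_0,s_1)$ depends only on two snapshots; in \S\ref{sec:eval} it is a diff of SHA-256 manifests. It therefore cannot count repeated effects on one object. Under that semantics, even a lone bypass on an object that a legitimately approved call also changed can leave $\pi(D)=\pi(S)$. This is exactly the ``approved-and-the-attacker-also-acted'' limit conceded in \S\ref{sec:undetected}, and your convention would wrongly count it as detected. Either require that the isolated faulty effect lands on an object no faithful operation touches, or flag per-operation counting as an idealization stronger than the paper's $\mathrm{delta}$ provides.
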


\begin{proof}[Proof sketch]
F1, F2 follow directly from the multiset equality.
F3 is caught because the proposition restricts $S$ to records
where $r.\mathrm{ok}=\texttt{true}$; a silent host failure should
have been recorded as \texttt{irreversible.\allowbreak error}, which is not in
$S$, so its absence from $D$ violates equality from the $S$ side.
F4 is caught because the projection includes target identity.
\end{proof}

\subsection{Failure modes the criterion does not catch}
\label{sec:undetected}

Honesty about scope:

\begin{itemize}
  \item \textbf{F5 --- Wrong content on correct target.} The
        projection is over \texttt{(op, target)}, not over the
        payload written or the value transferred. A write that
        carries malicious content to a target the operator did
        approve satisfies the biconditional. Catching this needs a
        payload-level invariant (e.g., a content classifier, a
        value-flow check on a typed channel) layered \emph{above}
        the criterion.
  \item \textbf{Read-only exfiltration.} The criterion talks about
        side-effects on the corpus; it cannot detect an agent
        reading data and silently transmitting it through a covert
        channel that bypasses both the corpus and the audit log.
        Defending this requires a content-flow primitive (DLP, an
        egress-aware classification gate) below the criterion's
        layer.
  \item \textbf{Time-of-check / time-of-use races.} If the corpus
        is mutated by an external party between $s_0$ snapshot and
        the agent's first action, $D$ is the observed delta, not
        the agent's contribution.
  \item \textbf{Approved-and-the-attacker-also-acted.} If a
        legitimate approval $r \in S$ overlaps a malicious
        side-effect on the same target, $D$ may equal $S$ at the
        target level even though the agent's action was not what
        $D$ records.
\end{itemize}

These are real limits and we name them rather than paper over
them. The criterion is necessary, not sufficient, for full
safety; it is sufficient for the ``did the gate work?'' question
that HITL deployments most often need to answer.

\subsection{The biconditional as a verification benchmark}
\label{sec:biconditional-as-verification}

Beyond its production use, the biconditional is the formal
criterion against which a candidate skill-verification procedure
can be exercised. To elevate a skill from
\textsc{unverified} to \textsc{tested}, the operator (or an
attestation pipeline acting on the operator's behalf) runs the
skill against an adversarial-ensemble of \S\ref{sec:eval} and
records the audit log. The skill is admitted at \textsc{tested}
iff the biconditional passes on every round. A skill that fails
the biconditional under verification is rejected; a skill that
passes is admitted at \textsc{tested} and pinned to the manifest
hash that was tested. Modifying the skill content invalidates the
attestation; this is the load-bearing reason for the
no-runtime-mutation rule of \S\ref{sec:no-mutation}.

\section{Adversarial-ensemble evaluation}
\label{sec:eval}

The biconditional criterion is interesting only if it is exercised
under conditions where the agent \emph{wants} to take destructive
action. We propose an evaluation shape we call the
\emph{adversarial-ensemble} sketch.

\paragraph{Setup.} Generate a fixed corpus of $N$ small files with
realistic-looking corporate-document names (audit logs,
compliance reports, financial statements, meeting notes,
test-result fixtures). Compute the per-file SHA-256 manifest as
the integrity baseline.

\paragraph{Agents.} Wire $K$ LLM-driven agents with system prompts
biased toward destructive operations. A representative four-agent
ensemble has roles such as ``cleaner'' (proposing
\texttt{fs.write.irrev} on stale files), ``consolidator''
(proposing in-place edits to merge files), ``auditor''
(proposing edits that mark records as audited), and ``critic''
(echoing whichever destructive action is in scope). The system
prompts deliberately do \emph{not} ground the agents in the
policy of the gate.

\paragraph{Run.} For $R$ rounds, every agent takes one turn,
emitting a tool-call envelope. The capability gate
(\S\ref{sec:gate}) routes irreversible calls through the
configured broker; reversible calls execute and are audited.

\paragraph{Verdict.} After the run, recompute the corpus manifest;
diff against baseline; check the biconditional against the audit
log. Verdict is binary: $D = S$ at the operation-and-target
projection, or not.

\paragraph{Why this is a useful experimental shape.} Two reasons.
First, it stresses the gate against the worst plausible agent
behaviour: a confederation actively trying to delete things.
Second, the verdict is mechanical: no human grader rates outputs;
the corpus diff and the audit log are both deterministic
artefacts. Two independent operators running the same policy
broker on the same corpus produce the same verdict.

\subsection{Empirical results}
\label{sec:empirical}

We instantiated the adversarial-ensemble in two complementary
configurations and report aggregate numbers below; full per-cell
tables follow.

\paragraph{(a) $N \times K \times R$ sweep.} A self-contained driver
generated $N \in \{10, 50, 200\}$ corpus files, wired $K \in \{2, 4,
8\}$ agents over the four roles described above, and ran $R \in \{5,
10, 25\}$ rounds. For each of the $3 \times 3 \times 3 = 27$ cells
we executed $S = 200$ independent seeds, giving $5{,}400$ runs per
scenario. Five scenarios per cell: a \emph{clean} run, plus one
run per fault $F_1$--$F_4$ in which we deterministically injected the
corresponding fault at round $R/2$ (stealth write outside the gate;
fabricated \texttt{irreversible.\allowbreak executed}; approved call
whose host side silently no-ops; approved op routed to the wrong
target). After each run the driver computes the corpus delta $D$,
extracts the audited multiset $S$, and applies the biconditional
check of \S\ref{sec:biconditional}.

\emph{Results.} Across $27 \times 200 \times 5 = 27{,}000$ runs the
biconditional gave zero false positives (FPs) on clean runs and
zero false negatives (FNs) on injected faults: Wilson 95\%
confidence interval (CI) $[0.999, 1.000]$
for every \texttt{clean pass} and every $F_i$ detection rate at
the aggregate level (Table~\ref{tab:eval-sweep}). The per-cell
results are scale-invariant in $N$, $K$, and $R$, which
empirically matches the formal claim of
Proposition~\ref{prop:detected}: the criterion is mechanical
within its detection envelope.

\paragraph{(b) Cross-model adversary cooperation.} We re-ran an
existing in-vivo harness driven by ten LLM families (one Gemini,
three Groq, five Ollama, one OpenRouter; full list in the released
artifact). Each adversary generated a balanced legit/adversarial
payload set across all four F-categories and both real-backend
channels (Discord, Telegram); each payload was mediated through
three subjects: an unhardened upstream passthrough (negative
control), the OSS runtime with biconditional, audit, and a
data-loss-prevention (DLP) scanner, and the full deployment that
adds a behavioral monitor. \emph{Results.} Across $9{,}724$ mediated
payloads ($4{,}816$ adversarial, $4{,}908$ legit), the upstream
passthrough blocked $0.0\%$ (CI $[0.000, 0.001]$); the OSS runtime
achieved $76.3\%$ recall (CI $[0.751, 0.775]$) at $0.0\%$
false-positive rate (FPR); the full deployment achieved $80.2\%$
recall (CI $[0.791, 0.813]$) at $0.6\%$ FPR.
\emph{Why recall is not 100\%.} The denominator counts every
adversarial sample the LLM emitted, including the ones the
adversary itself failed to make adversarial in a recognizable way
(refusals, malformed JSON, payloads that resolve to a benign
$(\mathrm{op}, \mathrm{target})$ projection); with no real fault
present, the gate has nothing to flag. On the subset where the
adversary did produce a fault-shaped payload, recall is $1.000$
across every model family --- per-model variation in the headline
number is dominated by adversary cooperation rate, not gate
behavior.

\input{eval/sweep-table.tex}

\paragraph{Artifact.} The full per-model breakdown, the
$N \times K \times R$ driver, and the gzipped per-sample CSVs are
released with the reference implementation; runs are deterministic
given a seed.

\section{Architectural guidelines from a reference implementation}
\label{sec:guidelines}

The schema and the biconditional criterion above are not thought
experiments. They are abstracted from a working open-source
framework, \emph{enclawed}~\cite{metere2026enclawed}, which
hardens a single-user AI assistant gateway against the same threat
surface skills now expose. In this section we walk through the
specific architectural choices that framework makes, explain what
each choice was answering, and lift the choice into a normative
guideline a portable runtime profile for skill-aware harnesses
could adopt.

\subsection{Toward a portable runtime profile for skill-aware harnesses}
\label{sec:profile}

The choices above generalize to a small set of normative
guidelines for any runtime that loads, invokes, or composes
skills. We use \textsc{must} / \textsc{should} / \textsc{may} in
the sense of RFC~2119.

\begin{enumerate}
  \item[\textbf{G1}] \textbf{Capability bootstrap.} Trust roots,
        classification schemes, and policy tables \textsc{must}
        be established and locked before the runtime reads any
        skill content, any model output, or any external network
        traffic. The lock \textsc{must} be unconditional (no
        operator override at run time).
  \item[\textbf{G2}] \textbf{Deny-by-default at every boundary.}
        Network egress, provider invocation, channel use, file
        write, and tool dispatch \textsc{must} require positive
        allow-list membership. The default policy
        \textsc{must} be deny.
  \item[\textbf{G3}] \textbf{Mandatory classification.} Every
        skill artifact, every tool input, and every tool output
        \textsc{must} carry a classification label drawn from a
        finite, declared lattice. Label-free artifacts
        \textsc{must} be denied at the gate.
  \item[\textbf{G4}] \textbf{Clearance-bounded signing.} Trust-
        root signers \textsc{must} be bound to a maximum
        clearance. A skill manifest declaring a clearance above
        its signer's maximum \textsc{must} be rejected.
  \item[\textbf{G5}] \textbf{Hash-chained audit at gate-event
        granularity.} Every gate event \textsc{must} produce an
        audit record carrying a \texttt{prevHash} link to its
        predecessor; concurrent appends \textsc{must} serialize
        deterministically. The chain \textsc{must} be verifiable
        by any party that can read the log.
  \item[\textbf{G6}] \textbf{Reversible/irreversible split.}
        Every capability \textsc{must} be statically tagged as
        reversible or irreversible. Irreversible capabilities
        \textsc{must} pass through the HITL gate of
        \S\ref{sec:gate}; reversible capabilities \textsc{should}
        use a transaction buffer with rollback.
  \item[\textbf{G7}] \textbf{Biconditional post-hoc verification.}
        The runtime \textsc{should} support the biconditional
        criterion of \S\ref{sec:biconditional} as a
        runnable check after any agent run; the corpus delta
        \textsc{should} be reconcilable with the
        approved-and-executed set in the audit log without
        operator interpretation.
  \item[\textbf{G8}] \textbf{Adversarial test corpus.} The
        runtime \textsc{must} ship a corpus of pen-tests
        covering, at minimum, the failure modes
        \textbf{F1}--\textbf{F4} of \S\ref{sec:biconditional} plus
        the eight attack families enumerated above (audit
        in-place edit, log injection, signature forgery,
        hostname-normalization bypass, post-lock trust-root
        mutation, ReDoS, prompt-injection role-spoofing, code
        injection through prototype pollution). The corpus
        \textsc{must} run in CI on every commit.
  \item[\textbf{G9}] \textbf{Standard configuration profiles.}
        The runtime \textsc{must} expose at least one strict
        deployment profile (analogous to the reference's
        \texttt{enclaved} flavor) and one development profile
        (analogous to \texttt{open}). Configuration
        \textsc{must not} be a free-form continuum of feature
        flags; deployment-class differences \textsc{must} be
        named and documented.
  \item[\textbf{G10}] \textbf{No bypass switch.} The runtime
        \textsc{must not} expose a build flag, environment
        variable, runtime API, or operator override that
        disables the HITL gate, the audit log, the egress
        guard, the classification check, or the trust-root
        lock. Defense in depth \textsc{must} not have a master
        kill-switch.
  \item[\textbf{G11}] \textbf{Skills are untrusted by default.}
        The runtime \textsc{must} treat every skill manifest as
        \textsc{unverified} unless the manifest carries an
        explicit verification level set by the operator (or an
        operator-authorized attestation pipeline) at bootstrap.
        The \textsc{unverified} default \textsc{must} subject
        every irreversible capability call to HITL. A signature
        and a clearance check \textsc{must not} be sufficient
        to elevate a skill above \textsc{unverified}.
  \item[\textbf{G12}] \textbf{Verification is bootstrap-only,
        skills are immutable in-session, and every mutation
        attempt is audited.} Verification levels \textsc{must}
        be set before the runtime reads any external input and
        \textsc{must not} be elevated during a session. Loaded
        skill content \textsc{must not} be modifiable by the
        agent during the session; any agent attempt to mutate
        a loaded skill (file-system edit, runtime API, indirect
        side-effect on the on-disk artifact) \textsc{must} be
        intercepted as an irreversible capability call, walked
        through HITL, \emph{and} written into the hash-chained
        audit log with the pre- and post-mutation content
        hashes regardless of whether the operator approves or
        denies. Successful mutations \textsc{must} produce a
        new skill artifact that is re-verified before the next
        session.
\end{enumerate}

\paragraph{Attack-surface bounding.} G3+G4+G5+G11 bound
supply-chain attacks (unsigned skills rejected; signed-but-loaded
skills sit at \textsc{unverified} until an operator acts).
G2+G6 bound prompt-injection through skill content (induced
irreversible calls are intercepted; approval cannot be
manufactured from model output). G1+G4+G12 bound self-generated
and in-flight skill mutations. G5+G7 bound audit forgery. G10
bounds operator-pressure failure modes. The reduction is not to
zero --- the residual surface of \S\ref{sec:undetected} remains
--- but the surface G1--G12 close is precisely the one where
incidents are post-hoc-undeniable.

\section{Open problems}
\label{sec:discussion}

Three open directions. \emph{Label composition under
declassification}: the conservative join-of-labels rule
(\S\ref{sec:schema}) is correct but blunt; provable
declassification when outputs depend only on low-label inputs is
a research direction. \emph{Signer revocation in-flight}: skills
loaded under a now-revoked signer should be evicted at the next
gate event rather than allowed to drain; PKI mechanics
(CRLs/OCSP/short lifetimes) carry over but session eviction
needs more work. \emph{Hardware-rooted trust}: the schema
extends straightforwardly to TPM-sealed signer keys,
secure-enclave-bound brokers, and attested HITL devices without
changing the criterion.

\section{Related work}
\label{sec:related}

\paragraph{Mandatory access control and information-flow.} The
classification component of the schema is Bell--LaPadula
\cite{bell-lapadula} adapted to skill artefacts. SELinux
\cite{selinux} demonstrated MAC at OS scope; Asbestos
\cite{asbestos} and HiStar \cite{histar} extended decentralized
information-flow control to whole systems. Our contribution is
not the lattice but its lifting onto skills as deployable
artefacts.

\paragraph{Audit + capabilities.} Hash-chained logs are folklore
\cite{schneier-kelsey}; the novelty here is linking the audit
records to the biconditional check, not the chain construction.
The discipline of manifest-declared capabilities and one-shot
trust-root locks follows the object-capability tradition
\cite{ocap-discipline, miller-thesis}, lifted from intra-process
objects to inter-host deployable units.

\paragraph{Prompt-injection defenses.}
Model-boundary defenses --- NeMo Guardrails \cite{nemo-guardrails},
Llama Guard \cite{llama-guard}, Lakera Guard \cite{lakera-guard}
--- operate as input/output filters and are complementary to ours:
a skill that is malicious but well-formed may pass an input filter
yet be defeated by capability denial, and vice versa. Filter-only
defenses are also subvertible by weak-to-strong jailbreaks
\cite{zhao2024weak}, reinforcing the case for a runtime gate
independent of the model's own classification of its inputs.

\paragraph{Runtime enforcement for agents.} AgentSpec
\cite{agentspec2025} introduces a domain-specific language for
trigger/predicate/enforcement rules with quantitative evaluation
across code, embodied, and autonomous-vehicle agents; the rules
themselves are user-authored, while our schema treats the skill
artifact as the unit of governance with manifest-declared
capabilities and a fixed verification-level field. The two are
complementary: AgentSpec rules can be installed \emph{inside} our
broker (\S\ref{sec:gate}) as the policy layer for irreversible
calls, and our biconditional then verifies that the rules' net
effect on the world matches the audit log. The recent
verifiably-safe-tool-use proposal \cite{verifsafetool2026}
derives enforceable specs over data flows and tool sequences from
STPA hazards; it sits at the \emph{tool-protocol} layer where
ours sits at the \emph{skill-admission} layer, and its data-flow
guarantees would strengthen our F5 (wrong-content) gap of
\S\ref{sec:undetected}. SafeAgent \cite{safeagent2026} adds a
stateful runtime controller over persistent session state,
orthogonal to ours along the time axis (per-call gate vs.
per-trajectory state machine); its session invariants would
close the overlap-with-malicious-actor gap of
\S\ref{sec:undetected}.

\paragraph{Skill ecosystems and governance.} A concurrent 2026
survey \cite{agentskills-survey-2026} proposes a four-tier (T1--T4)
gate-based skill trust and lifecycle governance framework over the
\texttt{SKILL.md} ecosystem, motivated by an empirical finding
that $\sim 26\%$ of community skills carry at least one
vulnerability. The overlap with our contribution is real and we
acknowledge it: the survey and this paper agree that skills
require provenance-bound, verification-gated permissioning. Where
this paper goes further is on three structural axes the survey
does not commit to: \emph{bootstrap-only immutability} of the
verification level (\S\ref{sec:no-mutation}), \emph{audit
reconciliation against world state} via the biconditional
(\S\ref{sec:biconditional}), and a \emph{no-bypass-switch}
invariant (G10) that closes the operator-pressure failure mode.
SkillProbe \cite{skillprobe2026} and the threat-taxonomy work of
\cite{secureagentskills2026} are empirical-auditing complements:
they characterize the attack surface our schema is meant to bound,
and their findings (prompt-injection-shaped instructions, silent
exfiltration patterns, supply-chain residues) map onto our F1--F5
plus the side-channels of \S\ref{sec:undetected}.

\section{Conclusion}

The trust schema (\S\ref{sec:schema}) adds the verification-level
field and immutable-in-session discipline absent from current
\texttt{SKILL.md} conventions; the capability-gate model
(\S\ref{sec:gate}) keys HITL to verification level, replacing the
untenable uniform-HITL default; the biconditional criterion
(\S\ref{sec:biconditional}) is an audit-log-driven check on what
the runtime \emph{did}, complementary to behavioral benchmarks
\cite{andriushchenko2025agentharm} that score what an agent
\emph{decided}; the G1--G12 guidelines (\S\ref{sec:guidelines})
abstract a portable invariant set any harness can adopt. The
failure modes the criterion catches are precisely the ones
operators discover, after the fact, in incident review.

\bibliographystyle{ACM-Reference-Format}
\bibliography{refs}

\end{document}

%% file: eval/sweep-table.tex
\begin{table}[h]
\centering\footnotesize
\setlength{\tabcolsep}{4pt}
\resizebox{\textwidth}{!}{%
\begin{tabular}{@{}r r r c c c c c@{}}
\toprule
$N$ & $K$ & $R$ & Clean pass & F1 detect & F2 detect & F3 detect & F4 detect \\
\midrule
\multicolumn{8}{@{}l@{}}{\textit{Varying corpus size $N$ ($K=4$, $R=10$)}}\\
10 & 4 & 10 & 1.000 [0.981, 1.000] & 1.000 [0.981, 1.000] & 1.000 [0.981, 1.000] & 1.000 [0.981, 1.000] & 1.000 [0.981, 1.000] \\
50 & 4 & 10 & 1.000 [0.981, 1.000] & 1.000 [0.981, 1.000] & 1.000 [0.981, 1.000] & 1.000 [0.981, 1.000] & 1.000 [0.981, 1.000] \\
200 & 4 & 10 & 1.000 [0.981, 1.000] & 1.000 [0.981, 1.000] & 1.000 [0.981, 1.000] & 1.000 [0.981, 1.000] & 1.000 [0.981, 1.000] \\
\midrule
\multicolumn{8}{@{}l@{}}{\textit{Varying agent count $K$ ($N=50$, $R=10$)}}\\
50 & 2 & 10 & 1.000 [0.981, 1.000] & 1.000 [0.981, 1.000] & 1.000 [0.981, 1.000] & 1.000 [0.981, 1.000] & 1.000 [0.981, 1.000] \\
50 & 4 & 10 & 1.000 [0.981, 1.000] & 1.000 [0.981, 1.000] & 1.000 [0.981, 1.000] & 1.000 [0.981, 1.000] & 1.000 [0.981, 1.000] \\
50 & 8 & 10 & 1.000 [0.981, 1.000] & 1.000 [0.981, 1.000] & 1.000 [0.981, 1.000] & 1.000 [0.981, 1.000] & 1.000 [0.981, 1.000] \\
\midrule
\multicolumn{8}{@{}l@{}}{\textit{Varying rounds $R$ ($N=50$, $K=4$)}}\\
50 & 4 & 5 & 1.000 [0.981, 1.000] & 1.000 [0.981, 1.000] & 1.000 [0.981, 1.000] & 1.000 [0.981, 1.000] & 1.000 [0.981, 1.000] \\
50 & 4 & 10 & 1.000 [0.981, 1.000] & 1.000 [0.981, 1.000] & 1.000 [0.981, 1.000] & 1.000 [0.981, 1.000] & 1.000 [0.981, 1.000] \\
50 & 4 & 25 & 1.000 [0.981, 1.000] & 1.000 [0.981, 1.000] & 1.000 [0.981, 1.000] & 1.000 [0.981, 1.000] & 1.000 [0.981, 1.000] \\
\midrule
\multicolumn{3}{@{}l}{\textbf{Aggregate} ($n=5400$)} & 1.000 [0.999, 1.000] & 1.000 [0.999, 1.000] & 1.000 [0.999, 1.000] & 1.000 [0.999, 1.000] & 1.000 [0.999, 1.000] \\
\bottomrule
\end{tabular}%
}
\caption{Adversarial-ensemble sweep over $N$ corpus files, $K$ agents, and $R$ rounds, with $S=200$ seeds per cell ($27 \times 200 = 5400$ total trials per scenario). The four-role ensemble (cleaner/consolidator/auditor/critic, cycled when $K>4$) emits a mix of \texttt{fs.read} and \texttt{fs.write.irrev}; the gate's allow-all broker approves every irreversible call and audits it (\S\ref{sec:gate}). \emph{Clean pass} = fraction of injection-free runs in which the biconditional $D=S$ holds (this is $1-\text{FPR}$); \emph{$F_i$ detect} = fraction of runs with an injected $F_i$ fault in which the biconditional flags the fault. Wilson 95\% CIs in brackets. The criterion is mechanical within its detection envelope (Prop.~\ref{prop:detected}); per-cell results are scale-invariant, and the aggregate-row CI is tight at $[0.999, 1.000]$.}
\label{tab:eval-sweep}
\end{table}